
\documentclass[review,english]{elsarticle}

\usepackage{lineno,hyperref}
\modulolinenumbers[5]

\journal{arXiv.org}

\usepackage[T1]{fontenc}
\usepackage[latin9]{inputenc}
\usepackage{color}
\usepackage{float}
\usepackage{amsmath}
\usepackage{graphicx}
\usepackage{amssymb}
\usepackage{epstopdf}
\makeatletter

\floatstyle{ruled}
\newfloat{algorithm}{tbp}{loa}
\providecommand{\algorithmname}{Algorithm}
\floatname{algorithm}{\protect\algorithmname}

\newtheorem{thm}{\protect\theoremname}
\newtheorem{defn}[thm]{\protect\definitionname}
\newtheorem{cor}[thm]{\protect\corollaryname}
\newtheorem{prop}[thm]{\protect\propositionname}
\newtheorem{lem}[thm]{\protect\lemmaname}
\newproof{proof}{Proof}

\makeatother

\usepackage{babel}
\providecommand{\corollaryname}{Corollary}
\providecommand{\definitionname}{Definition}
\providecommand{\lemmaname}{Lemma}
\providecommand{\propositionname}{Proposition}
\providecommand{\remarkname}{Remark}
\providecommand{\theoremname}{Theorem}

\begin{document}
\begin{frontmatter}
\title{On the Complexity of Detecting $k$-Length Negative Cost Cycles
\thanks{\textcolor{black}{This research work is supported by  Natural Science
Foundation of China \#61300025, and Australian Research Council Discovery Project DP150104871.}}
}

\author[mymainaddress]{Longkun Guo\corref{mycorrespondingauthor}}
\cortext[mycorrespondingauthor]{Corresponding author}
\ead{lkguo@fzu.edu.cn}

\author[mysecondaryaddress]{Peng Li}
\ead{lipeng.net@gmail.com}

\address[mymainaddress]{College of Mathematics and Computer Science,
Fuzhou University, China}
\address[mysecondaryaddress]{Amazon Web Services, Amazon.com, Unite States}


\begin{abstract}
Given a positive integer $k$ and a directed graph with a cost on each edge, the $k$-length negative cost cycle ($k$\emph{LNCC})
problem is to determine whether there exists a negative cost cycle
with at least $k$ edges, and the fixed-point \emph{$k$-}length negative
cost cycle \emph{trail (FP$k$LNCCT)} problem is to determine whether
there exists a negative trail enrouting a given vertex (as the fixed
point) and containing only cycles with at least $k$ edges. The $k$\emph{LNCC}
problem first emerged in deadlock avoidance in synchronized streaming
computing network \cite{spaa10}, generalizing two famous problems: negative cycle
detection and the $k$-cycle problem. As a warmup by-production, the paper
first shows that \emph{FP$k$LNCCT is }${\cal NP}$-complete in multigraph\emph{
}even for\emph{ $k=3$} by reducing from the \emph{3SAT} problem.
Then as the main result, we prove the ${\cal NP}$-completeness of
$k$\emph{LNCC} by giving a sophisticated reduction from the 3 Occurrence
3-Satisfiability (\emph{3O3SAT}) problem, a known ${\cal NP}$-complete
special case of 3SAT in which a variable occurs at most three times.
The complexity result is interesting, since polynomial time algorithms
are known for both $2$\emph{LNCC} (essentially no restriction on
the value of $k$) and the
$k$-cycle problem of fixed $k$. This paper closes the open
problem proposed by Li et al. in \cite{spaa10} whether $k$\emph{LNCC}
admits polynomial-time algorithms.
\end{abstract}

\begin{keyword}
\emph{$k$-}length negative cost cycle, ${\cal NP}$-complete, 3 occurrence
3-satisfiability, 3-satisfiability.
\end{keyword}

\end{frontmatter}

\section{Introduction}

We define the following \textbf{\emph{$k$}}\emph{-length negative
cost cycle problem ($k$LNCC):}
\begin{defn}
\emph{Given a fixed integer $k$ and a directed graph $G=(V,\,E)$,
in which each edge $e\in E$ is with a cost $c(e)\rightarrow\mathbb{R}$
and a length $l(e)=1$, }\textbf{\emph{$k$}}\emph{LNCC is to determine
whether there exists a cycle $O$ with total length $l(O)=\sum_{e\in O}l(e)\geq k$
and total cost $c(O)<0$.}
\end{defn}
The \emph{$k$LNCC} problem arises in deadlock avoidance for streaming
computing, which is widely used in realtime analytics, machine learning,
robotics, and computational biology, etc. A streaming computing system
consists of networked nodes communicating through finite first-in
first-out (FIFO) channels, and a data stream referring to data transmitted
through a channel. In streaming computing, a compute node might need
to synchronize different incoming data streams. If the synchronized
streams have different rates (e.g. due to filtering \cite{spaa10}),
the computing network might deadlock. Several deadlock avoidance algorithms
have been proposed in \cite{spaa10}, which rely on inserting heartbeat
messages into data streams. When to insert those heartbeat messages,
however, depends on the network topology and buffer size configurations.
An open problem is deciding whether a given heartbeat message schedule
can guarantee deadlock freedom, which raises the negative-cost cycle
detection problem. Further, we are only interested in the negative-cost
cycle with length at least $k\geq3$, as deadlocks of few nodes can
be easily eliminated. This raises the above\emph{ $k$LNCC} problem.
Besides, in many cases, we are also interested in whether a particular
node in a streaming computing system is involved in a deadlock or
not, which brings the \textbf{\emph{$k$-}}\emph{length negative-cost
fixed-point cycle trail problem (FP$k$LNCCT)}:
\begin{defn}
\emph{Given a graph $G=(V,\,E)$ in which each edge $e\in E$ is with
a cost $c(e)\rightarrow\mathbb{R}$ and a length $l(e)=1$, a fixed
integer $k$, and}\textbf{\emph{ }}\emph{a fixed point $p\in G$,
the}\textbf{\emph{ }}\emph{FP$k$LNCCT problem is to determine whether
there exists a trail $T$ containing $p$, such that $c(T)<0$ and
$l(O)=\sum_{e\in O}l(e)\geq k$ for every $O\subseteq T$ (with $c(O)<0$).}
\end{defn}
Note that $c(O)<0$ makes no difference for the above definition,
as a negative cost trail must contain at least a negative cost cycle.
Further, if $G$ contains a vertex $p$ as a fixed point such that
the\textbf{ }FP$k$LNCCT problem is feasible, then $G$ must contain
at least a cycle\emph{ $O$ }with\emph{ $c(O)<0$ }and\emph{ $\sum_{e\in O}l(e)\geq k$,
}and\emph{ vice versa. }So if \emph{FP}$k$\emph{LNCCT} admits a polynomial-time
algorithm, then $k$\emph{LNCC }is also\emph{ }polynomially solvable.
That is because we can run the polynomial-time algorithm as a subroutine
to verify whether $G$ contains a vertex $p$ wrt which \emph{FP}$k$\emph{LNCCT}
is feasible, and then to verify whether $k$\emph{LNCC} is feasible.
Conversely, if the ${\cal NP}$-completeness of $k$\emph{LNCC}\textbf{
}is proven, it can be immediately concluded that \emph{FP}$k$\emph{LNCCT}\textbf{\emph{
}}is also ${\cal NP}$-complete\textbf{\emph{. }}

Throughout this paper, by \emph{walk }we mean an alternating sequence
of vertices and connecting edges; b\emph{y trail }we mean a walk that
does not pass over the same edge twice;\emph{ }by\emph{ path }we mean
a \emph{trail} that does not include any vertex twice; and by\emph{
cycle }we mean a path that begins and ends on the same vertex.

\subsection{Related works}

The $k$\emph{LNCC} problem generalizes two well known problems: the
negative cycle detection problem of determining whether there exist
negative cycles in a given graph, and the $k$-cycle problem (or namely
the long directed cycle problem \cite{cygan2015parameterized}) of
determining whether there exists a cycle with at least $k$ edges.
The former problem is known polynomially solvable via the Bellman-Ford
algorithm \cite{bellman1956routing,ford1956network} and is actually
$k$\emph{LNCC} of $k=2$. The latter problem, to determine whether
a given graph contains a cycle $O$ with $l(O)\geq k$, is $k$\emph{LNCC}
when $c(e)=-1$ for every $e\in E$. It is shown fixed parameter tractable
in \cite{gabow2008finding}, where an algorithm with a time complexity
$k^{O(k)}n^{O(1)}$ is proposed. The runtime is then improved to $O(c^{k}n^{O(1)})$
for a constant $c>0$ by using representative sets \cite{fomin2014efficient},
and later to $6.75^{k}n^{O(1)}$ independently by \cite{fomin2014representative}
and \cite{shachnai2014faster}. Compared to the two above results,
i.e. negative cycle detection and the $k$-cycle problem of fixed
$k$ are both polynomially solvable, it is interesting that \emph{$3$LNCC}
is ${\cal NP}$-complete as \emph{$3$LNCC }is exactly a combination
of the two problems belonging to ${\cal P}$.

\subsection{Our results}

The main result of this paper is proving the ${\cal NP}$-completeness
of $k$\emph{LNCC} in a simple directed graph. Since the proof is
constructive and complicated, we will first accomplish a much easier
task of proving the ${\cal NP}$-completeness of the \emph{FP}$k$\emph{LNCCT}
problem in multigraphs by simply reducing from the 3-Satisfiability
(\emph{3SAT}) problem, where a multigraph is a graph that allows multiple
edges between two nodes.
\begin{lem}
\label{lem:MG}For any fixed integer $k\geq3$, FP$k$LNCCT is ${\cal NP}$-complete
in a multigraph.
\end{lem}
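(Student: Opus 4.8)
The plan is to first check membership in ${\cal NP}$ and then reduce from 3SAT.

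For membership, note that a trail uses each edge at most once, so it has at most $|E|$ edges and is a polynomial-size certificate. Given a candidate $T$ one verifies in polynomial time that $p\in T$, that $c(T)=\sum_{e\in T}c(e)<0$, and that the subgraph spanned by the edges of $T$ contains no negative cycle of length less than $k$; since $k$ is a fixed constant the last test only inspects the $O(|V|^{k-1})$ closed walks of length below $k$. Hence FP$k$LNCCT $\in{\cal NP}$.

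For hardness I would reduce from 3SAT. Given a formula $\phi$ with variables $x_1,\dots,x_n$ and clauses $C_1,\dots,C_m$, I build a directed multigraph $G_\phi$ with a distinguished vertex $p$ whose skeleton is a single \emph{backbone}: it leaves $p$, runs through a variable gadget for each $x_i$ in order, then through a clause gadget for each $C_j$ in order, and returns to $p$, and every internal backbone vertex is given in-degree one so that any trail meeting $p$ is forced along essentially the whole backbone. The variable gadget for $x_i$ provides two internally disjoint routes from its entry to its exit, a true route and a false route; since the entry has in-degree one, a traversal picks exactly one of them, and this is the place where parallel arcs --- hence a multigraph --- are natural, because for a variable with few occurrences the two routes degenerate to parallel arcs. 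The false route of $x_i$ carries one private arc $\delta^{+}_{i,j}$ for each clause $C_j$ containing the literal $x_i$, and the true route a private arc $\delta^{-}_{i,j}$ for each clause containing $\neg x_i$. The clause gadget for $C_j$ offers three routes from entry to exit, one per literal, and the route for a positive occurrence of $x_i$ is forced through $\delta^{+}_{i,j}$, while the route for a negative occurrence is forced through $\delta^{-}_{i,j}$. Consequently the literal route for $x_i$ inside $C_j$ is usable exactly when the variable gadget did not spend that private arc, i.e. exactly when $x_i$ was set so as to satisfy the literal; if no literal of $C_j$ is satisfied, all three routes are blocked and the backbone cannot be completed. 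Finally I set all costs to $0$ except weight $m-1$ on the unique arc out of $p$ and weight $-1$ on one arc of each clause gadget, and I subdivide/pad every route with $k$ fresh degree-two vertices so that every directed cycle that can lie inside a trail of $G_\phi$ has length at least $k$.

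I would then establish both directions. Given a satisfying assignment, route the backbone through the corresponding route of each variable gadget and, inside each clause gadget, through the literal route of some satisfied literal; all the needed private arcs are free, the resulting closed walk is a single cycle through $p$ of length $\ge k$, and its cost is $(m-1)-m=-1<0$, so it is feasible. Conversely, any trail meeting $p$ must use the arc out of $p$ (of cost $+(m-1)$), or else reaches $p$ only after running, by the in-degree-one chaining, through all $m$ clause gadgets; since the only negative arcs sit in clause gadgets and each can be used at most once, a negative trail must in fact traverse all $m$ clause gadgets, and reading off the routes chosen in the variable gadgets yields an assignment under which every clause is satisfied. The cycle-length requirement holds automatically because of the padding, and it is precisely what outlaws the cheating traversals: any traversal that tries to reuse a spent private arc, or to splice a short detour through a gadget, would either repeat an edge (impossible in a trail) or insert a cycle of length below $k$, which the hypothesis $k\ge 3$ forbids.

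The step I expect to be genuinely laborious is the hardness direction, specifically ruling out unintended feasible trails. Because FP$k$LNCCT permits open trails and trails that revisit vertices, one must argue carefully that the in-degree-one skeleton really forces a negative trail to sweep all clause gadgets, that no traversal can harvest extra negative weight by re-entering a clause gadget or by using the private arcs of a route out of turn, and that every short cycle a cheating traversal could form is indeed blocked by $k\ge 3$. Pinning down the gadget degrees and the exact placement of the padding paths so that all of these statements hold simultaneously is the delicate part; the polynomial size of $G_\phi$, the forward direction, and ${\cal NP}$-membership are routine.
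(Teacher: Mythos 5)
Your overall strategy (reduce from \emph{3SAT}, force a negative trail through $p$ to sweep one gadget per clause by putting a $+(m-1)$ toll near $p$ and a $-1$ reward in each clause gadget) matches the paper's in spirit, but the mechanism you use to couple variables to clauses is different and is where the proof breaks down. You enforce consistency by making the private arc $\delta^{+}_{i,j}$ lie \emph{simultaneously} on the false route of $x_i$'s variable gadget and on a literal route of $C_j$'s clause gadget. That forces the tail of $\delta^{+}_{i,j}$ to have in-degree at least $2$ and its head out-degree at least $2$, so your invariant that ``every internal backbone vertex has in-degree one'' fails exactly at the arcs that carry the logic. A trail may enter $\delta^{+}_{i,j}$ from the variable side and exit on the clause side (or vice versa), producing hybrid traversals in which the literal routes consumed inside the clause gadgets no longer correspond to the assignment you read off the variable gadgets; it can also wander along an untraversed variable route from $\delta^{+}_{i,j}$ to $\delta^{+}_{i,j'}$ and jump between clause gadgets out of order. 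You explicitly defer ruling these out as ``the delicate part,'' but that \emph{is} the soundness of the reduction; without the degree bookkeeping showing every such jump strands the trail, the proof is incomplete. A second concrete error: your claim that a trail meeting $p$ which avoids the arc out of $p$ ``reaches $p$ only after running through all $m$ clause gadgets'' is false, because the in-degree-one chain constrains where arcs come from, not where a trail may \emph{begin}. A trail may start just before the $-1$ arc of $C_m$, cross it, and end at $p$ via $p$'s unique in-arc: it contains $p$, has cost $-1$, contains no cycle, and exists for every formula. (The paper's own proof quietly assumes the trail enters $u_1$, i.e.\ is effectively closed; you need to adopt that convention explicitly, but even then the shared-arc issue above remains.)

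For contrast, the paper avoids both problems by never sharing edges between gadgets. Each variable $x_i$ is a two-vertex lobe with $a_i$ parallel copies of $(y_i,z_i)$ and $b_i$ parallel copies of $(z_i,y_i)$, all of cost $-1$; clause vertices $u_j,v_j$ attach to the lobes by cost-$0$ edges, and the only expensive edge is $(v_m,u_1)$ of cost $m-1$. Soundness is then a pure counting argument: a negative trail through $u_1$ must pay $m-1$, hence use at least $m$ lobe edges, but between any two lobe edges it must cross some $(v_i,u_{i+1})$, so it uses \emph{exactly} one lobe edge per clause segment, and that edge names a satisfying literal. Consistency of the induced assignment comes not from edge-disjointness of shared arcs but from the prohibition of negative length-$2$ cycles: traversing both $(y_i,z_i)$ and $(z_i,y_i)$ creates a cost-$(-2)$ $2$-cycle, which the FP$k$LNCCT condition with $k\ge 3$ forbids. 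If you want to salvage your construction, the cleanest fix is to drop the shared private arcs entirely and adopt this cost-counting mechanism, which makes the ``no unintended traversal'' argument essentially automatic.
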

Then following a similar main idea of the proof of Lemma \ref{lem:MG}
but with more sophisticated details, we will prove the ${\cal NP}$-completeness
of $k$\emph{LNCC} (and hence also\emph{ FP}$k$\emph{LNCCT}) in a
simple directed graph.
\begin{thm}
\label{thm:1} For any fixed integer $k\geq3$, $k$LNCC is ${\cal NP}$-complete
in a simple graph.
\end{thm}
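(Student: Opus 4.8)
The plan is to reduce from 3O3SAT, mirroring the gadget philosophy behind Lemma~\ref{lem:MG} but working hard to stay inside a \emph{simple} graph. Membership in ${\cal NP}$ is immediate: a witnessing cycle $O$ has at most $|V|$ edges, and one checks $l(O)\ge k$ and $c(O)<0$ in polynomial time. For hardness, given a 3O3SAT instance $\varphi$ with variables $x_1,\dots,x_n$ and clauses $C_1,\dots,C_m$, I would build a directed graph $G_\varphi$ together with a distinguished pair of vertices $s,t$ so that every sufficiently long closed walk decomposes, edge-disjointly, into (i) a ``variable-assignment'' portion that, for each $x_i$, forces a commitment to either the literal $x_i$ or the literal $\overline{x_i}$, and (ii) a ``clause-verification'' portion that can be traversed with negative total cost only if the chosen assignment satisfies every clause. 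The key design choice is to use the length-$\ge k$ requirement as the enforcement mechanism: padding paths of carefully tuned length are inserted so that a cycle that ``cheats'' (picks up a negative-cost shortcut without committing consistently) is necessarily \emph{too short}, i.e.\ has length $<k$, while an honest satisfying traversal has length exactly (or at least) $k$ and cost $<0$.

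Concretely, I would chain the variable gadgets in series between $s$ and $t$: for each $x_i$ a small ``diamond'' with a top arc (meaning $x_i=\text{true}$) and a bottom arc (meaning $x_i=\text{false}$), each arc subdivided into enough internal vertices to both guarantee simplicity and to account precisely for the length budget. The literal occurrences (at most three per variable, which is what 3O3SAT buys us) are realized as distinct parallel-but-subdivided branches, so that no two edges share the same endpoints; this is exactly where the ``at most three occurrences'' hypothesis keeps the local degree bounded and the length bookkeeping finite and uniform. Clause gadgets $C_j$ are connected so that the return trip from $t$ back to $s$ must thread through a selector vertex for each clause, and each selector can be passed only via one of the three literal-branches whose corresponding assignment arc was taken on the forward trip — this ``synchronization'' between the forward assignment pass and the backward verification pass is enforced by making the alternative (mismatched) connector edges either absent or long enough to blow the length budget. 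Negative cost is concentrated on a single ``reward'' edge (or a small bundle) that can only be collected once the entire $s$--$t$--$s$ round trip has been completed consistently; every other edge has cost $0$ (or a tiny positive cost) so that $c(O)<0$ holds exactly for the honest cycles.

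I would then prove the two directions. If $\varphi$ is satisfiable, fix a satisfying assignment, take the forward path through the corresponding assignment arcs, pass each clause selector through one satisfied literal-branch, collect the reward edge, and return; by construction this is a genuine (simple) cycle of length $\ge k$ and cost $<0$. Conversely, if $G_\varphi$ has a cycle $O$ with $l(O)\ge k$ and $c(O)<0$, then $O$ must use the reward edge (the only source of negativity), which forces $O$ to traverse the full $s$--$t$--$s$ structure; the length-$\ge k$ constraint then rules out every ``shortcut'' configuration, so $O$ visits each variable diamond on exactly one side and each clause selector through a branch consistent with that side, reading off a satisfying assignment. Finally, to get the result for every fixed $k\ge 3$ rather than just $k=3$, I would uniformly inflate all padding lengths (and $k$'s target) by the same additive/multiplicative amount, noting the construction size stays polynomial since $k$ is a fixed constant. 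The main obstacle I anticipate is the simultaneous enforcement of \emph{both} consistency (the backward pass must respect the forward assignment) \emph{and} the length threshold using only simple-graph edges: in a multigraph one can cheaply duplicate edges to pin down traversal counts, but here every such ``counter'' must be realized by subdivided, non-parallel paths whose lengths have to be balanced across all variables and clauses at once, and getting those length equations to close simultaneously — so that cheating is provably short while honesty is provably long enough — is the delicate combinatorial core of the reduction.
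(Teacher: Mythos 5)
Your proposal starts from the right place (reduction from \emph{3O3SAT}, membership in ${\cal NP}$, variable gadgets plus clause gadgets plus a single negative ``reward'' edge), but the central enforcement mechanism you propose cannot work. You want the length threshold $l(O)\ge k$ to rule out cheating cycles by making them ``too short.'' However, $k$ is a \emph{fixed constant} ($k=3$ suffices), while the instance size $n+m$ grows; any cycle that threads through your chained $s$--$t$ structure, honest or not, has length polynomial in $n+m$ and therefore trivially satisfies $l(O)\ge k$. Moreover, $l(O)\ge k$ is a \emph{lower} bound, so making mismatched connectors ``long enough to blow the length budget'' excludes nothing --- a longer cycle is still feasible. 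The only thing the constraint $l(O)\ge k$ with $k\ge 3$ can ever exclude is a cycle with at most $k-1$ edges, i.e., essentially a length-$2$ cycle. The paper exploits exactly this and nothing more: the variable lobes are built by identifying vertices so that traversing both an $x_i$-subpath and an $\overline{x}_i$-subpath forces a length-$2$ cycle (forbidden by $k\ge3$), and that is the entire consistency mechanism; all remaining enforcement is done by \emph{costs}, not lengths. Concretely, each positive-literal subpath costs $m-2m+m=0$, each negative-literal subpath costs $\frac{1}{m+1}$, any path between clause gadgets $u_h$ and $v_l$ with $h\ne l$ costs at least $m$ (Proposition~\ref{prop:neighbour}), every $U$-to-$U$ path avoiding the return edge has non-negative cost (Proposition~\ref{prop:nonnegative}), and the single return edge $(v_m,u_1)$ costs $-1$; hence a negative cycle of length at least $3$ must visit $u_1,v_1,\dots,u_m,v_m$ in order and pick up a satisfied literal in every clause, while a satisfying assignment yields a cycle of cost at most $\frac{m}{m+1}-1<0$.

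A second problem is your two-pass architecture: a forward assignment pass from $s$ to $t$ followed by a backward clause-verification pass that re-enters the literal branches would revisit vertices of the variable gadgets, but a cycle (a closed path, in the paper's terminology) may not repeat vertices, so the ``synchronization'' you hope for cannot be realized by a second traversal of the same gadgets. The paper avoids this with a single pass: the cycle walks the clause gadgets in order and dips into one literal subpath of the appropriate variable lobe inside each clause gadget; consistency across the (at most three) occurrences of a variable is guaranteed purely by the vertex identifications within that variable's lobe. Unless you replace the length-budget mechanism with a cost-calibration of this kind (or something equivalent), the reduction as sketched does not go through.
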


\section{The ${\cal NP}$-completeness of \emph{FP}$k$\emph{LNCCT} in Multigraphs}

In this section, we will prove Lemma \ref{lem:MG} by reducing from
\emph{3SAT} that is known to be ${\cal NP}$-complete \cite{gary1979computers}.
\textcolor{black}{In an instance of }\textcolor{black}{\emph{3SAT}}\textcolor{black}{,
we are given $n$ variables $\{x_{1},\dots,x_{n}\}$ and $m$ clauses
$\{C_{1},\dots,C_{m}\}$, where $C_{i}$ is the }\textcolor{black}{\emph{OR}}\textcolor{black}{{}
of at most three }\textcolor{black}{\emph{literals}}\textcolor{black}{,
and each literal is an occurrence of the variable $x_{j}$ or its
negation. The }\textcolor{black}{\emph{3SAT}}\textcolor{black}{{} problem
is to determine whether there is an assignment satisfying all the
$m$ clauses. }

For any given instance of \emph{3SAT}, the key idea of our reduction
is to construct a digraph $G$, such that $G$ contains a negative
cost \emph{trail} with only cycles of length at least 3 and enrouting
a given vertex iff the instance of \emph{3SAT} is satisfiable. The
construction is composed with the following three parts. First, for
each variable $x_{i}$ with $a_{i}$ occurrences of $x_{i}$ and $b_{i}$
occurrences of $\overline{x}_{i}$ in the clauses, we construct a
\emph{lobe}\footnote{The term \emph{lobe }was used to denote an unit of the auxiliary graph
constructed for an instance of SAT, as in \cite{guo2013finding} and
many others \cite{xu2006caa,fortune1980directed}. } which contains two vertices, denoted as $y_{i}$ and $z_{i}$, and
$a_{i}+b_{i}$ edges of cost $-1$ between the two vertices, i.e.
$a_{i}$ copies of edge $(y_{i},\,z_{i})$ and $b_{i}$ copies of
$(z_{i},\,y_{i})$ ( A lobe is depicted as in the dashed circles in
Figure \ref{fig:The-construction-of-1}). For briefness, we say an
edge in the lobes is a\emph{ lobe-edge}. Then, for each clause $C_{j}$,
add two vertices $u_{i}$ and $v_{i}$, as well as edge $(v_{i},u_{i+1})$,
$1\leq i\leq m-1$, with cost $0$ and edge $(v_{m},\,u_{1})$ with
cost $m-1$. Last but not the least, for the relationship between
the variables and the clauses, say variable $x_{j}$ occurs in clause
$C_{i}$, we add two edges with cost $0$ to connect the lobes and
the vertices of clauses:
\begin{itemize}
\item If $C_{i}$ contains $x_{j}$, then add two edges $(u_{i},\,y_{j}),\,(z_{j},\,v_{i})$;
\item If $C_{i}$ contains $\overline{x}_{j}$, then add two edges $(u_{i},\,z_{j}),\,(y_{j},\,v_{i})$.
\end{itemize}
An example of the construction for a \emph{3SAT} instance is depicted
in Figure \ref{fig:The-construction-of-1}.

\begin{figure}
\includegraphics[width=0.95\columnwidth]{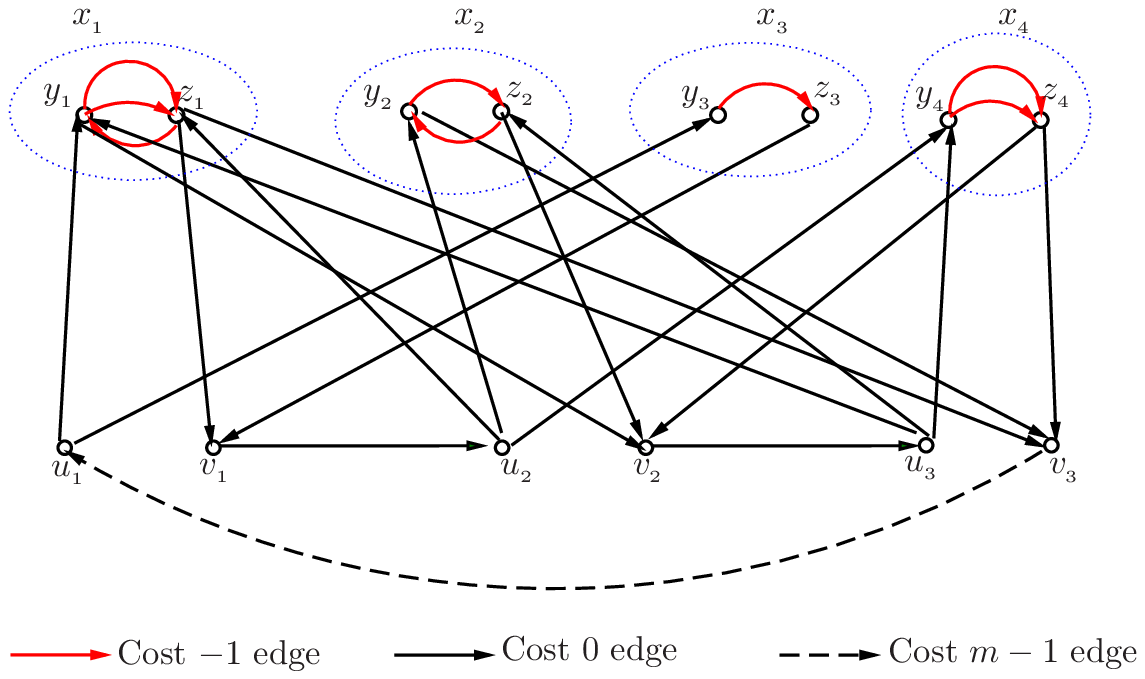}

\caption{\label{fig:The-construction-of-1}The construction of $G$ for an
\emph{3SAT} instance $(x_{1}\vee x_{3})\land(\overline{x}_{1}\vee x_{2}\vee x_{4})\land(x_{1}\vee\overline{x}_{2}\vee x_{4})$.}
\end{figure}

Then since \emph{FP$k$LNCCT} is clearly in ${\cal NP}$, the correctness
of Lemma \ref{lem:MG} can be immediately obtained from the following
lemma:
\begin{lem}
\label{lem:3sattolnfcc}An instance of 3SAT is satisfiable iff in
its corresponding auxiliary graph $G$ there exists a negative-cost
trail containing $u_{1}$ but no length-2 cycles.
\end{lem}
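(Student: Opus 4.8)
The plan is to prove the biconditional in Lemma~\ref{lem:3sattolnfcc} by exhibiting the natural correspondence between satisfying assignments and negative-cost trails. Throughout, a crucial structural observation is that the only negative-cost edges in $G$ are the lobe-edges (each of cost $-1$), and the only positive-cost edge is $(v_m,u_1)$ of cost $m-1$; all connector and sequencing edges have cost $0$. Hence a trail $T$ has $c(T)<0$ exactly when it traverses at least $m$ lobe-edges and also uses the edge $(v_m,u_1)$ (since without that edge a trail through the $u_i,v_i$ vertices can be extended at most into a single pass that never closes up, and no negative cycle of length $\geq 3$ can be formed from lobe-edges alone because a single lobe admits only length-$2$ cycles). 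The ``no length-$2$ cycle'' requirement translates, for each lobe $i$, into the statement that $T$ uses only edges of the form $(y_i,z_i)$ or only edges of the form $(z_i,y_i)$, never both directions — this is the combinatorial heart of the encoding of a consistent truth value for $x_i$.

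First I would prove the forward direction: given a satisfying assignment, for each clause $C_i$ pick one literal that is satisfied, say it is $x_j$ (the case $\overline x_j$ is symmetric), and route the trail from $u_i$ along $(u_i,y_j)$, then across one unused copy of a lobe-edge $(y_j,z_j)$ inside lobe $j$, then along $(z_j,v_i)$, then along the sequencing edge to $u_{i+1}$; after clause $C_m$ use $(v_m,u_1)$ to close the walk into a trail $T$ containing $u_1$. I must check: (i) within each lobe we only ever traverse edges in one direction, which holds because the assignment fixes $x_j$ either true or false and a satisfied occurrence of $x_j$ (resp.\ $\overline x_j$) only ever asks for a $(y_j,z_j)$-edge (resp.\ a $(z_j,y_j)$-edge); (ii) we never reuse an edge, which holds because each occurrence of a variable contributes its own private copy of a lobe-edge, and distinct clauses use distinct connector and sequencing edges; (iii) the cost is $(-1)\cdot m + (m-1) = -1 < 0$; and (iv) every cycle contained in $T$ has length $\geq 3$ — indeed the only cycle in $T$ is essentially $T$ itself, which has length $\geq 3m + 1$, and by (i) there are no length-$2$ subcycles.

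For the converse, suppose $T$ is a negative-cost trail containing $u_1$ with no length-$2$ cycle. By the cost analysis above, $T$ must use the edge $(v_m,u_1)$ and at least $m$ lobe-edges; tracing $T$ from $u_1$, the structure of $G$ forces it to pass through the clause gadgets $C_1,\dots,C_m$ in order, and at clause $C_i$ the trail enters at $u_i$, must proceed to some lobe $j$ with $x_j$ or $\overline x_j$ occurring in $C_i$, cross a lobe-edge, and return to $v_i$ — this is the only way to get from $u_i$ to $v_i$. Now define an assignment by setting $x_j$ to true if $T$ ever uses an edge $(y_j,z_j)$ and to false otherwise; the no-length-$2$-cycle hypothesis guarantees this is well-defined, since $T$ cannot use both a $(y_j,z_j)$-edge and a $(z_j,y_j)$-edge. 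One checks that for each clause $C_i$, the literal whose lobe-edge $T$ crossed at stage $i$ is satisfied by this assignment, so all $m$ clauses are satisfied.

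The main obstacle I anticipate is the bookkeeping in the converse direction, specifically arguing that a negative trail is forced to visit \emph{all} $m$ clause gadgets and to do so in cyclic order $u_1 \to v_1 \to u_2 \to \cdots \to v_m \to u_1$: I need the cost $m-1$ on $(v_m,u_1)$ to be calibrated so that fewer than $m$ lobe-edges cannot pay for that single positive edge, and I need to rule out ``shortcut'' trails that re-enter a clause gadget or skip one — this relies on showing that the $u_i$ and $v_i$ vertices have the restricted in/out-neighborhoods dictated by the construction, so that the only edge out of $v_i$ (for $i<m$) is $(v_i,u_{i+1})$ and the only edges into $u_i$ come from $v_{i-1}$ (or from the clause connectors, which lead back out again). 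Once that rigidity is established, the literal-selection argument and the well-definedness of the assignment follow directly from the prohibition on length-$2$ cycles.
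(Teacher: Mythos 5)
Your proposal is correct and follows essentially the same route as the paper: build the trail clause-by-clause from a chosen satisfied literal (cost $-m + (m-1) = -1$) in one direction, and in the other use the calibration of the cost $m-1$ on $(v_m,u_1)$ against the $m-1$ available separator edges $(v_i,u_{i+1})$ to force exactly one lobe-edge per clause segment, with the no-length-2-cycle condition guaranteeing a consistent assignment. The only caveat is that your converse direction leaves the pigeonhole/rigidity step partly as an "anticipated obstacle" rather than executing it, but the ingredients you name (restricted in/out-neighborhoods of $u_i,v_i$ and the at-least-$m$-lobe-edges count) are exactly those the paper uses.
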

\begin{proof}
Suppose there exists a negative-cost trail $T$, which contains $u_{1}$
but \emph{NO} negative cost length-2 cycle. Then let $\tau$ be a
true assignment for the \emph{3SAT} instance according to $T$: if
$T$ goes through $(y_{i},\,z_{i})$, then set $\tau(x_{i})=true$;
Otherwise, set $\tau(x_{i})=false$. It remains to show such an assignment
will satisfy all the clauses. Firstly, we show that the path $P=T\setminus\{v_{m},\,u_{1}\}$
must go through all vertices of $\{v_{i}\vert i=1,\,\dots,\,m\}$.
Since $T$ contains $u_{1}$, $T$ has to go through edge $(v_{m},\,u_{1})$,
as the edge is the only one entering $u_{1}$. Then because $T$ is
with negative cost and the cost of edge $(v_{m},\,u_{1})$ is $m-1$,
$P$ has to go through at least $m$ edges within the $n$ lobes,
as only the edges of lobes has a negative cost $-1$. According to
the construction of $G$, between two lobe-edges on $P$, there must
exist at least an edge of $\{v_{i},\,u_{i+1}\vert i=1,\,\dots,\,m-1\}$,
since $v_{i}$ has only one out-going edge $(v_{i},\,u_{i+1})$ while
every edge leaving a lobe must enter a vertex of $\{v_{i}\vert i=1,\,\dots,,\,m-1\}$.
So $P=T\setminus\{v_{m},\,u_{1}\}$ has to go through all the $m-1$
edges of $\{v_{i},\,u_{i+1}\vert i=1,\,\dots,\,m-1\}$, and hence
through all vertices of $\{v_{i}\vert i=1,\,\dots,\,m\}$. Secondly,
assume $v_{j}$ and $v_{j'}$ are two vertices of $\{v_{i}\vert i=1,\,\dots,\,m\}$,
such that $P(v_{j},\,v_{j'})\cap\{v_{i}\vert i=1,\,\dots,\,m\}=\{v_{j},\,v_{j'}\}$.
Again, because there must be at least an edge of $\{v_{i},\,u_{i+1}\vert i=1,\,\dots,\,m-1\}$
between two lobe-edges, there must be at least a lobe-edge appearing
on $P(v_{j},\,v_{j'})$, otherwise there will be at most $m-1$ lobe-edges
on $T$. That is, there must be exactly a lobe edge, say $(y_{i},\,z_{i})$,
on $P(v_{j},\,v_{j'})$. Then according to the construction of graph
$G$, $x_{i}$ appears in $C_{j}$, and hence $\tau(x_{i})=true$
satisfies $C_{j}$. The case for $(z_{i},\,y_{i})$ appears on $P(v_{j},\,v_{j'})$
is similar. Therefore, the \emph{3SAT} instance is feasible as it
can be satisfied by $\tau$.

Conversely, suppose the instance of \emph{3SAT} is satisfiable, and
a true assignment is $\tau:\,x\rightarrow\{true,\,false\}$. Then
for clause $C_{k}$, there must exist a literal, say $w_{k}$, such
that $\tau(w_{k})=true$. If $w_{k}$ is an occurrence of $x_{i}$,
then set the corresponding subpath as $P_{k}=u_{k}-y_{i}-z_{i}-v_{k}$;
otherwise set $P_{k}=u_{k}-z_{i}-y_{i}-v_{k}$. Then clearly, $P=\{P_{k}\vert k=1,\dots,\,m\}\cup\{(v_{h},\,u_{h+1})\vert h=1,\dots,\,m-1\}$
exactly composes a path from $u_{1}$ to $v_{m}$ with a cost of $-m$,
as it contains $m$ lobe-edge and other edges of cost 0. So $T=P\cup\{v_{m},\,u_{1}\}$
is a negative cost trail of length at least 3. Besides, since $\tau(x_{i})$
must be either \emph{true} or \emph{false}, there exist no length-2
cycles on $P$. This completes the proof.\qed
\end{proof}
However, the above proof can not be immediately extended to prove
the ${\cal NP}$-completeness of $k$\emph{LNCC}, since there are
two tricky obstacles. First,\emph{ }in the above proof, there might
exist negative cycles with length at least three but without going
through $u_{1}$. Thus, in Lemma \ref{lem:3sattolnfcc}, containing
$u_{1}$ is mandatory. we Second, Lemma \ref{lem:3sattolnfcc} holds
only for multigraphs as some of the lobes are already multigraphs.
Thus, to extend the proof to $k$\emph{LNCC}, we need to eliminate
negative cycles bypassing $u_{1}$ and to transform the (multigraph)
lobes to simple graphs.

\section{The ${\cal NP}$-completeness Proof of $k$\emph{LNCC} }

In this section, to avoid the two obstacles as analyzed in the last
section, we will prove Theorem \ref{thm:1} by reducing from the \textcolor{black}{3
occurrence }\textcolor{black}{\emph{3SAT}} (\emph{3O3SAT}) problem
that is known ${\cal NP}$-complete \cite{tovey1984simplified}. \textcolor{black}{Similar
to }\textcolor{black}{\emph{3SAT,}}\textcolor{black}{{} in an instance
of }\textcolor{black}{\emph{3O3SAT}}\textcolor{black}{{} we are also
given $m$ clauses $\{C_{1},\dots,C_{m}\}$ and $n$ variables $\{x_{1},\dots,x_{n}\}$,
and the task is to determine whether there is an assignment satisfying
all the $m$ clauses. The only difference is, however, each variable
$x_{i}$ (including both literal $x_{i}$ and $\overline{x}_{i}$)
appears at most 3 times in all the $m$ clauses. }To simplify the
reduction, we assume that the possible occurrences of a variable $x$
in an instance of \emph{3O3SAT} fall in the following three cases:
\begin{itemize}
\item \textbf{Case 1:} All occurrences of $x$ are all positive literal
$x$;
\item \textbf{Case 2:} The occurrences of $x$ are exactly one positive
literal and one negative literal\textcolor{black}{; }
\item \textbf{Case 3:} The occurrences of $x$ are exactly two positive
literals and one negative literal.
\end{itemize}
The above assumption is without loss of generality. We note that there
are still two other cases:
\begin{itemize}
\item \textbf{Case 4:} All occurrences of $x$ are negative literals;
\item \textbf{Case 5:} Exactly two occurrences of negative literals and
one positive literal.
\end{itemize}
However, Case 4 and Case 5 can be respectively reduced to Case 1 and
Case 3, by replacing occurrences of $\overline{x}$ and $x$ respectively
with $y$ and $\overline{y}$. Therefore, we need only to consider\emph{
3O3SAT} instances with variables satisfying Case 1-3.

The key idea of the proof is, for any given instance of \emph{3O3SAT},
to construct a graph $G$, such that there exists a cycle $O$ with
$c(O)<0$ and $l(O)\geq3$ in $G$ if and only if the instance is
satisfiable.  An important fact used in the construction is that
every variable appears at most 3 times in a \emph{3O3SAT} instance.
In the following, we will show how to construct $G$ according to
clauses, variables, and the relation between clauses and variables.
\begin{enumerate}
\item For each $C_{k}$:

Add to $G$ two vertices $u_{k}$ and $v_{k}$, as well as edge $(v_{k},u_{k+1})$,
$1\leq k\leq m-1$ with cost 0, and edge $(v_{m},u_{1})$ with cost
$-1$.
\item For each variable $x_{i}$, construct a lobe according to the occurrences
of $x_{i}$ and $\overline{x}_{i}$ (The construction a lobe is as
depicted in a dashed circle as in Figure \ref{fig:The-construction-of}):

\begin{itemize}
\item \textbf{Case 1:} All occurrences of $x_{i}$ in are positive literal
$x_{i}$, such as $x_{4}$ in Figure \ref{fig:The-construction-of}.

For the $j$th occurrence of $x_{i}$, add a directed edge $(y_{i}^{j},z_{i}^{j})$
and assign cost $-2m$ to it.
\item \textbf{Case 2:} Exactly one occurrence for each of positive literal
$x_{i}$ and negation $\overline{x}_{i}$, such as $x_{2}$ in Figure
\ref{fig:The-construction-of}.

\begin{enumerate}
\item Add two vertices $z_{i}^{j_{2}}=y_{i}^{j_{1}}$ and $y_{i}^{j_{2}}=z_{i}^{j_{1}}$,
and connect them with directed edges $(y_{i}^{j_{1}},z_{i}^{j_{1}})$
and $(y_{i}^{j_{2}},z_{i}^{j_{2}})$.
\item Assign edge $(y_{i}^{j_{1}},z_{i}^{j_{1}})$ with cost $-2m$ and
$(y_{i}^{j_{2}},z_{i}^{j_{2}})$ with cost $\frac{1}{m+1}$;
\end{enumerate}
\item \textbf{Case 3: }Exactly 2 occurrences of $x_{i}$ and one occurrence
$\overline{x}_{i}$, such as $x_{1}$ in Figure \ref{fig:The-construction-of}.

\begin{enumerate}
\item For the two positive literals of $x_{i}$, say the $j_{1}$th and
$j_{2}$th occurrence of $x_{i}$, $j_{1}<j_{2}$, add four vertices
$y_{i}^{j_{1}},\,z_{i}^{j_{1}}$, $y_{i}^{j_{2}},\,z_{i}^{j_{2}}$,
and two directed edges $(y_{i}^{j_{1}},z_{i}^{j_{1}})$, $(y_{i}^{j_{2}},z_{i}^{j_{2}})$
connecting them with cost $-2m$;
\item For the negation $\overline{x}_{i}$, say the $j_{3}$th occurrence,
set $z_{i}^{j_{3}}=y_{i}^{j_{1}}$ and $y_{i}^{j_{3}}=z_{i}^{j_{2}}$,
and add three directed edges $(y_{i}^{j_{3}},\,y_{i}^{j_{2}}),\,(y_{i}^{j_{2}},\,z_{i}^{j_{1}}),\,(z_{i}^{j_{1}},\,z_{i}^{j_{3}})$
with costs $\frac{1}{2m+2}$, 0 and $\frac{1}{2m+2}$, respectively.
\end{enumerate}
\end{itemize}
\item For the relation between the variables and the clauses, say $C_{k}$
is the clause containing the $j$th occurrence of $x_{i}$, i.e. $C_{k}$
is the $j$th clause $x_{i}$ appears in, add directed edges $(u_{k},y_{i}^{j})$
and $(z_{i}^{j},v_{k})$. If the occurrence of $x_{i}$ in $C_{k}$
is a positive literal, assign the newly added edges with cost $m$;
Otherwise, assign them with cost 0. Note that no edges will be added
between lobes and $u_{k},\,v_{k}$ if $x_{i}$ does not appear in
$C_{k}$.
\end{enumerate}
An example of the construction of $G$ according to $F=(x_{1}\vee x_{3})(\overline{x}_{1}\vee x_{2}\vee x_{4})(x_{1}\vee\overline{x}_{2}\vee x_{4})$
is as depicted in Figure \ref{fig:The-construction-of}.

\begin{figure}
\includegraphics[width=0.95\columnwidth]{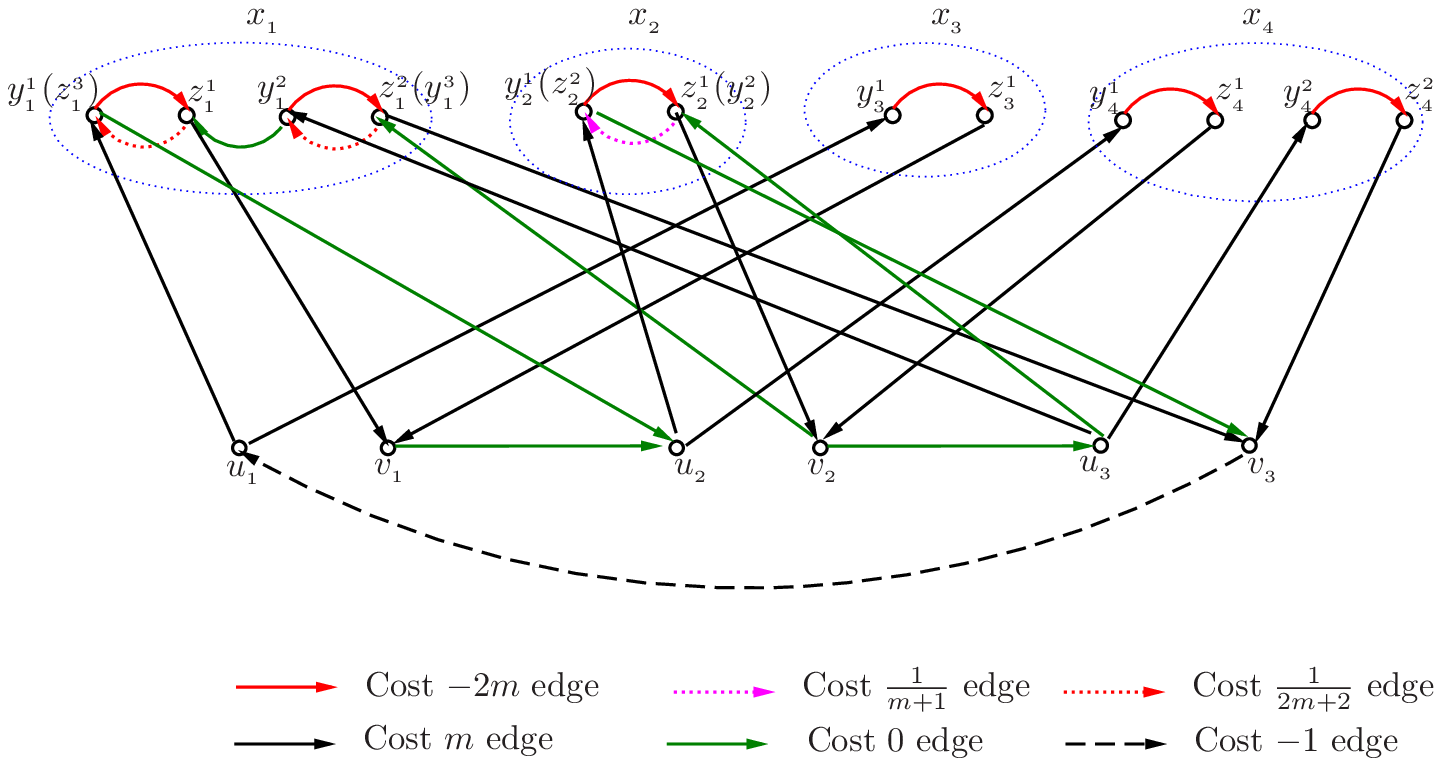}

\caption{\label{fig:The-construction-of}The construction of $G$ for an \emph{3O3SAT}
instance $(x_{1}\vee x_{3})\land(\overline{x}_{1}\vee x_{2}\vee x_{4})\land(x_{1}\vee\overline{x}_{2}\vee x_{4})$.}
\end{figure}

As \emph{$k$LNCC} is apparently in ${\cal NP}$, it remains only
to prove the following lemma to complete the proof of Theorem \ref{thm:1}.
\begin{lem}
\label{lem:reduction}An instance of $3O3SAT$ is feasible iff the
corresponding graph $G$ contains a cycle $O$ with $l(O)\geq3$ and
$c(O)<0$.
\end{lem}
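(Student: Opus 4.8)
The plan is to prove both directions of the equivalence by translating satisfying assignments into negative cycles through the ``fixed point'' vertex $v_m$ (which is forced by the single negative edge $(v_m,u_1)$ of cost $-1$) and back. First I would establish the easy direction: given a satisfying assignment $\tau$, for each clause $C_k$ pick a satisfied literal $w_k$ and route a subpath $P_k$ from $u_k$ to $v_k$ through the corresponding lobe-edge $(y_i^j, z_i^j)$ of cost $-2m$ (plus the two connecting edges of cost $m$ or $0$, depending on the sign of the literal). Concatenating the $P_k$'s with the connector edges $(v_k, u_{k+1})$ and closing up with $(v_m, u_1)$ yields a cycle $O$. A careful but routine cost count gives: $m$ occurrences of a $-2m$ edge contribute $-2m^2$; each positive-literal choice adds $2m$ (two edges of cost $m$), so at most $+2m^2$ from all clauses; the closing edge adds $-1$; and crucially one must check that the cycle is a genuine simple cycle — here the lobe gadgets in Cases 2 and 3 are designed so that a variable set consistently cannot force re-use of a vertex, and the small fractional-cost edges $\frac{1}{m+1}$, $\frac{1}{2m+2}$ are never traversed on such a canonical route. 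The length is clearly $\ge 3m \ge 3$.

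For the converse, suppose $G$ contains a cycle $O$ with $l(O) \ge 3$ and $c(O) < 0$. The key structural claim is that any such $O$ must use the edge $(v_m, u_1)$: I would argue that $G$ minus this edge is ``cost-nonnegative'' in the sense that every cycle avoiding $(v_m,u_1)$ has cost $\ge 0$. This is where the fractional costs do their work — the $-2m$ lobe-edges can only be ``paid for'' by traversing through the clause-connector structure, and any cycle that loops within a single lobe (e.g. using the back-edges of cost $\frac{1}{2m+2}$ in Case 3, or the $\frac{1}{m+1}$ edge in Case 2) picks up strictly positive cost; a cycle that uses a $-2m$ edge but is not the ``global'' cycle through $(v_m,u_1)$ must enter via a $u_k$–edge of cost $m$ and leave via a $v_k$-edge, and tracing the connector edges $(v_k, u_{k+1})$ forces it all the way around to need $(v_m, u_1)$. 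Once $O$ is known to pass through $(v_m, u_1)$, the path $P = O \setminus \{(v_m,u_1)\}$ runs from $u_1$ to $v_m$; since $c(O) < 0$ and $c(v_m,u_1) = -1$, we get $c(P) \le 0$, which forces $P$ to traverse at least $m$ of the $-2m$ lobe-edges — but $P$ can visit each $u_k$ (and each $v_k$) at most once, so it traverses exactly one lobe-edge ``per clause'' and thus exactly one satisfied literal per clause. Reading off $\tau(x_i) = true$ iff the positive lobe-edge of $x_i$ is used then yields a well-defined assignment (well-definedness being exactly what the Case 2 and Case 3 gadgets enforce, since using both a positive and the negative lobe-edge of the same variable would require re-entering a shared vertex), and this $\tau$ satisfies every clause.

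The main obstacle I expect is the converse direction's structural analysis — specifically, proving rigorously that \emph{every} negative-cost cycle of length $\ge 3$ must be the ``canonical'' global cycle through $(v_m, u_1)$ that visits each clause exactly once and each variable-lobe consistently. Unlike the multigraph case of Lemma~\ref{lem:3sattolnfcc}, here there is no fixed-point constraint to lean on, so the burden falls entirely on the gadget costs: I would need a case analysis on which edges a hypothetical ``bad'' cycle could use, showing (i) short cycles inside a lobe have cost $\ge \frac{1}{m+1} > 0$ or $\ge \frac{1}{2m+2} > 0$ or involve no negative edge, (ii) any cycle using a $-2m$ edge is forced by the in/out-degree structure of the $y_i^j, z_i^j, u_k, v_k$ vertices to thread through the clause chain and hence through $(v_m, u_1)$, and (iii) such a global cycle cannot ``double up'' on any clause or variable without exceeding the cost budget or reusing a vertex. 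The scaling of the costs ($-2m$ for lobe-edges, $m$ for positive-literal connectors, $-1$ for the closing edge, and the tiny fractions) is precisely calibrated so that the arithmetic $-2m^2 + 2m^2 - 1 < 0$ is tight and any deviation (an extra lobe-edge, a missed clause, a detour through a fractional edge) tips the total to $\ge 0$; making this ``tightness forces canonical structure'' argument airtight is the crux of the proof.
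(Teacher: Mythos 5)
Your overall strategy coincides with the paper's: the forward direction builds a canonical $u_1$--$v_m$ path from a satisfying assignment and closes it with $(v_m,u_1)$, and the converse pins down the structure of any negative cycle via exactly the two facts you identify (the paper isolates them as Proposition \ref{prop:nonnegative}, that every path between clause-vertices in $G\setminus e(v_m,u_1)$ has non-negative cost, and Proposition \ref{prop:neighbour}, that a $u_h$-to-$v_l$ path with $h\neq l$ meeting $U$ only at its endpoints costs at least $m$). However, your cost accounting misreads the gadget, and as written the forward direction is wrong. A negative literal has \emph{no} $-2m$ lobe-edge: in Case 2 its lobe-edge costs $\frac{1}{m+1}$, in Case 3 its lobe-path costs $\frac{1}{2m+2}+0+\frac{1}{2m+2}$, and its two connector edges cost $0$ (only positive-literal connectors cost $m$). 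So whenever a clause is satisfied only by a negative literal, the canonical route \emph{must} traverse the fractional-cost edges you claim are ``never traversed,'' and the tally ``$m$ lobe-edges of cost $-2m$ against $2m$ per clause of connectors'' does not apply. The correct calibration is per clause: a positive-literal subpath costs $m-2m+m=0$ and a negative-literal subpath costs $\frac{1}{m+1}$, hence $c(P)\leq\frac{m}{m+1}<1$ and $c(O)=c(P)-1<0$. The tight inequality is $\frac{m}{m+1}<1$, not $-2m^2+2m^2-1<0$.

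In the converse, your blanket claim that every cycle avoiding $(v_m,u_1)$ has cost $\geq 0$, and your item (i) that short cycles inside a lobe have positive cost, are both false as stated: the Case 2 and Case 3 lobes contain length-2 cycles of cost $-2m+\frac{1}{m+1}$ and $-2m+\frac{1}{2m+2}$ respectively, which are strongly negative. These are harmless only because the problem demands $l(O)\geq 3$, and one must check from the out-degrees inside each lobe (e.g.\ $z_i^{j_1}$ has no lobe-internal out-edge other than the one back to $y_i^{j_1}$) that no cycle of length at least $3$ lies entirely within a lobe; this is precisely why the reduction works for $k\geq 3$ and fails for $k=2$, so it must be argued explicitly rather than asserted that lobe cycles are non-negative. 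With those two repairs, the remainder of your outline --- forcing $e(v_m,u_1)\in O$, ordering the clause vertices $u_1\prec v_1\prec\dots\prec u_m\prec v_m$, bounding $c(P(u_h,v_h))\leq\frac{1}{m+1}$ so that each clause segment uses exactly one literal subpath, and reading off a consistent $\tau$ --- matches the paper's argument.
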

Let $U=\{u_{i},\,v_{i}\vert i=1,\dots,m\}$ be the set of vertices
that correspond to the clauses. We first prove a proposition that
if a path from $u_{h}$ to $v_{l}$ ($h\neq l$) does not enroute
any other $u\in U$, the cost of the path is at least $m$.
\begin{prop}
\label{prop:neighbour}Let $P(u,\,v)$ be a path from $u$ to $v$.
For any path $P(u_{h},\,v_{l})$ that satisfies$P(u_{h},\,v_{l})\cap U=\{u_{h},\,v_{l}\}$,
if $h\neq l$, then $c(p(u_{h},\,v_{l}))\geq m$.
\end{prop}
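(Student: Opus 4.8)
The plan is to trace what a path $P(u_h,v_l)$ with $P(u_h,v_l)\cap U=\{u_h,v_l\}$ can look like, using the fact that it enters $u_h$, leaves it, and only touches $U$ again when it reaches $v_l$. First I would observe that the only edges leaving $u_h$ go into lobe vertices $y_i^j$ (the edges $(u_h,y_i^j)$ added in step 3), and the only edges entering $v_l$ come from lobe vertices $z_i^j$ (the edges $(z_i^j,v_l)$). So the path must consist of: a first edge $(u_h,y_i^j)$, then a sub-walk entirely inside the union of the lobes, then a final edge $(z_{i'}^{j'},v_l)$. Since the path may not revisit any vertex of $U$, it never returns to the "clause backbone," so everything strictly between the first and last edge lives among the lobe-edges and the internal lobe structure.

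Next I would do the cost bookkeeping case by case on which lobes the path threads through, invoking the crucial observation $h\neq l$. If the occurrence indexed by $(i,j)$ (entered via $(u_h,y_i^j)$) is a positive literal of $x_i$, then $(u_h,y_i^j)$ has cost $m$; if the occurrence indexed $(i',j')$ (exited via $(z_{i'}^{j'},v_l)$) is a positive literal, then $(z_{i'}^{j'},v_l)$ also has cost $m$, and any negative lobe-edges encountered have cost $-2m$, which one must rule out or bound. The point of $h\neq l$ is that the path cannot enter a lobe at $y_i^j$ and leave it at $z_i^j$ with both edges going to/from the same clause vertex; it must exit into $v_l$ with $l\neq h$. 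I would argue that to get from the entry point to any $z_{i'}^{j'}$ that has an edge into a vertex $v_l$ with $l\neq h$, the path is forced to use the small-cost connector edges of cost $\frac{1}{m+1}$ or $\frac{1}{2m+2}$ inside a Case-2 or Case-3 lobe (these are the only way to "turn around" inside a lobe without hitting a $-2m$ edge), or else it must traverse a $-2m$ edge; in the former situation the accumulated cost is at least $m + 0 + \dots \ge m$ from the clause-connector edge alone, and in the latter situation I must show the $-2m$ is compensated. The compensation comes from the fact that traversing a $-2m$ lobe-edge forces the path to both enter via a cost-$m$ edge $(u_h,y_i^j)$ and eventually leave via a cost-$m$ edge, since positive-literal lobe-edges are flanked by cost-$m$ connectors on both sides; so one gets $m + (-2m) + m = 0$, and then the remaining structure (the small positive connector edges, or a second similar triple) pushes the total to at least $m$. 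Handling the case where the literal at $u_h$ is negative (connector cost $0$) needs care: then I must show the path is forced through a different lobe or a different exit that still collects a cost-$m$ edge, again using $h\neq l$ and the lobe topology.

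The main obstacle I anticipate is the exhaustive but delicate case analysis of all the ways a path can wind through a Case-3 lobe — with its vertices $y_i^{j_1},z_i^{j_1},y_i^{j_2},z_i^{j_2}$ identified in pairs $z_i^{j_3}=y_i^{j_1}$, $y_i^{j_3}=z_i^{j_2}$ and the three connector edges of costs $\frac{1}{2m+2},0,\frac{1}{2m+2}$ — and in particular showing that no clever routing can collect a $-2m$ edge without simultaneously being forced to collect (at least) two cost-$m$ connector edges or one cost-$m$ edge plus a positive fractional connector, so that the running total never drops below $m$. I would organize this by classifying, for each lobe type, the possible "pass-through" segments $P \cap (\text{that lobe})$ as a short list of admissible subpaths (each being a simple path in the lobe's digraph between a $y$-vertex adjacent to some $u$ and a $z$-vertex adjacent to some $v$), computing the cost of each, and then summing over the lobes visited while using $h \neq l$ to exclude the one degenerate segment (enter $y_i^j$, immediately exit $z_i^j$, same clause) that would otherwise give cost $m + (-2m) + m = 0 < m$ only when $h = l$. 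Once every admissible segment is shown to contribute nonnegative cost and at least one contributes $\ge m$, the proposition follows by adding up.
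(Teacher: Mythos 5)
Your overall strategy --- trace how a $U$-avoiding path can thread a lobe, enumerate the admissible entry/exit segments, and use $h\neq l$ to kill the cheap ones --- is the same idea as the paper's proof, and the enumeration in your last paragraph would succeed if carried out. But two specific steps in your middle paragraph would fail or lead you astray. First, the ``compensation'' route for a $-2m$ edge is a dead end: by construction the only connector into $y_i^j$ from outside the lobe is $(u_{h'},y_i^j)$ and the only connector out of $z_i^j$ is $(z_i^j,v_{h'})$ \emph{for the same clause index} $h'$ (and the extra lobe-internal arcs touching $y_i^{j}$ or $z_i^{j}$ in a Case-3 lobe would force a vertex repetition). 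So any segment containing a $-2m$ edge begins at $u_{h'}$ and ends at $v_{h'}$, i.e.\ forces $h=l$; there is no legal $h\neq l$ path containing a $-2m$ edge at all, compensated or not. Note also that your own arithmetic $m+(-2m)+m=0$ never reaches the claimed bound $m$, and there is no ``remaining structure'' left to add once the segment has exited to $v_{h'}$ --- so the bounding alternative cannot be made to work; only outright exclusion does. Second, ``the path is forced through a different lobe'' is impossible: distinct lobes are vertex-disjoint and are joined only through $U$, so a path with $P(u_h,v_l)\cap U=\{u_h,v_l\}$ lives in exactly one lobe. This single-lobe observation is not cosmetic --- it is what closes the remaining case where both connector edges have cost $0$: then both endpoints correspond to negative occurrences of the \emph{same} variable, and since each variable has at most one negative occurrence (Cases 1--3), they are the same occurrence and again $h=l$.

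For comparison, the paper organizes exactly these two facts into a short proof: (i) no $-2m$ edge can lie on such a path because it is sandwiched between $(u_{h'},\cdot)$ and $(\cdot,v_{h'})$ with a common $h'$, so all traversed edges are non-negative; (ii) if no cost-$m$ connector were used, both the first and last edges would be cost-$0$ connectors attached to negative occurrences of one and the same variable, forcing $h=l$. Hence at least one cost-$m$ edge appears and $c(P(u_h,v_l))\geq m$. Reorganizing your case analysis around these two claims avoids the exhaustive walk through the Case-3 lobe that you anticipate as the main obstacle.
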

\begin{proof}
For every edge $(y,\,z)$ with cost $-2m$, clearly there exists only
one edge $e_{1}$ entering $y$, and only one edge $e_{2}$ leaving
$z$. Furthermore, $e_{1}=(u_{h'},\,y)$ and $e_{2}=(z,\,v_{h'})$.
So $P(u_{h},\,v_{l})$, $h\neq l$, as in the proposition can not
go through any cost $-2m$ edge. That is, $P(u_{h},\,v_{l})$, $h\neq l$,
can only go through the non-negative cost edges. It remains to show
$P(u_{h},\,v_{l})$, $h\neq l$, must go through at least one cost
$m$ edge.

Suppose $P(u_{h},v_{l}),\,h\neq l$ does not go through any cost $m$
edges. Let $(u_{h},\,y_{i}^{j_{1}}),$$\,(z_{i}^{j_{2}},\,v_{l})$$\in P(u_{h},\,v_{l})$
be the two edges leaving $u_{h}$ and entering $v_{l}$, respectively.
Then $y_{i}^{j_{1}}$ and $z_{i}^{j_{2}}$ must incident to two edges
that corresponds to the negation of two variables. Further, the two
variable must be identical, since vertices of two distinct lobes will
be separated by $U$, and hence for $P(y_{i}^{j_{1}},\,z_{i}^{j_{2}})\subset P(u_{h},\,v_{l})$,
$P(y_{i}^{j_{1}},\,z_{i}^{j_{2}})\cap U\neq\emptyset$. This contradicts
with $P(u_{h},\,v_{l})\cap U=\{u_{h},\,v_{l}\}$.\qed
\end{proof}
\begin{prop}
\label{prop:nonnegative}In graph $G\setminus e(v_{m},\,u_{1})$,
every path $P(u,\,v)$, $u,\,v\in U$, has a non-negative cost.
\end{prop}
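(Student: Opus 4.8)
The plan is to classify the paths $P(u,v)$ with $u,v\in U$ in the graph $G\setminus e(v_m,u_1)$ according to whether or not they pass through an intermediate vertex of $U$, and to reduce the general case to the ``atomic'' case handled by Proposition~\ref{prop:neighbour}. First I would observe that the only edges with negative cost are the lobe-edges of the form $(y_i^j,z_i^j)$ with cost $-2m$ (the auxiliary edges inside Case~2 and Case~3 lobes have the small positive costs $\frac{1}{m+1}$ or $\frac{1}{2m+2}$, the clause-chaining edges $(v_k,u_{k+1})$ have cost $0$, and the connector edges $(u_k,y_i^j),(z_i^j,v_k)$ have cost $m$ or $0$), so the only way a path can accumulate negative cost is by traversing such a lobe-edge.

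Next I would split into two cases. \textbf{Case A: $P(u,v)\cap U=\{u,v\}$,} i.e.\ the path contains no intermediate clause-vertex. If $u=u_h$ and $v=v_l$ with $h\neq l$, Proposition~\ref{prop:neighbour} gives $c(P)\geq m\geq 0$. If $u=u_h$ and $v=v_h$ for the same index $h$, then the path from $u_h$ must leave along some connector $(u_h,y_i^{j_1})$ and arrive at $v_h$ along some connector $(z_i^{j_2},v_h)$; since $x_i$ occurs in $C_h$ at most once (indeed any variable occurs at most three times total, but a given occurrence is tied to a single clause), we must have $j_1=j_2=j$, the connector costs are $m$ and $m$ (positive literal) or $0$ and $0$ (negative literal), and in between the path traverses at most the one lobe-edge $(y_i^j,z_i^j)$ of cost $-2m$ together with non-negative auxiliary lobe-edges; in the positive-literal case the two connectors contribute $2m$, dominating the $-2m$, so $c(P)\geq 0$, and in the negative-literal case the path cannot use a $-2m$ lobe-edge at all (by the entry/exit argument of Proposition~\ref{prop:neighbour}, a $-2m$ edge $(y,z)$ is reached only from a $u$-vertex and left only to a $v$-vertex with the \emph{same} index, so it cannot sit on a $u_h$--$v_h$ subpath using the negative-literal connectors), hence $c(P)\geq 0$. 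The remaining subcases with $u,v$ both $u$-vertices or both $v$-vertices are impossible: a $u$-vertex has no incoming edge other than a $(v,u)$-chain edge or $e(v_m,u_1)$ (which is deleted), and symmetrically for $v$-vertices, so no such path with only its two endpoints in $U$ exists.

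\textbf{Case B: $P(u,v)$ contains an intermediate vertex $w\in U$.} Then I would decompose $P(u,v)$ at every vertex of $U$ it visits into maximal subpaths whose only $U$-vertices are their endpoints; each such subpath is either a single chain-edge $(v_k,u_{k+1})$ of cost $0$, or a subpath of the type analyzed in Case~A (note $e(v_m,u_1)$ is absent, so no subpath uses it), hence each has non-negative cost, and therefore $c(P(u,v))\geq 0$. The main obstacle I anticipate is the careful bookkeeping in the $u_h$--$v_h$ atomic subcase: one must rule out a path that enters a $-2m$ lobe-edge via a negative-literal connector, and one must also make sure that when the path does use a positive-literal lobe-edge the \emph{two} connectors it is forced to use really do contribute $+2m$ rather than one $m$ and one $0$ — this is where the rigidity of the lobe gadgets (each $-2m$ edge $(y_i^j,z_i^j)$ has a unique in-edge from $u_h$ and unique out-edge to $v_h$ with matching index, and matching literal-type, so the entering and leaving connectors carry the same cost) does the real work, and I would state that rigidity as the first lemma-style observation before the case split.
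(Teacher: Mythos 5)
Your argument is correct, but it is organized quite differently from the paper's. The paper proves Proposition \ref{prop:nonnegative} with a single global charging step: it observes that the only negative-cost edges in $G\setminus e(v_m,u_1)$ are the $-2m$ lobe-edges $(y_i^j,z_i^j)$, that each such edge (when traversed by a path) forces the unique usable in-edge of $y_i^j$ and out-edge of $z_i^j$ onto the path as well, and that these are precisely the two cost-$m$ connectors, so the three edges together contribute $0$ and everything else is non-negative. It never decomposes the path at intermediate $U$-vertices and never invokes Proposition \ref{prop:neighbour}. You instead split $P(u,v)$ into $U$-atomic segments, dispose of the $u_h$--$v_l$ ($h\neq l$) segments by citing Proposition \ref{prop:neighbour}, and then do a per-lobe case analysis for the $u_h$--$v_h$ segments. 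Both proofs ultimately rest on the same rigidity fact that you isolate at the end (each $-2m$ edge is entered only via its cost-$m$ connector and left only via its cost-$m$ connector, up to path-simplicity), but your version buys some extra robustness: the paper's claim that there is ``exactly one edge entering $y_i^j$'' is literally false for Case~2 and Case~3 lobes (where, e.g., $y_i^{j_1}=z_i^{j_2}$ receives a second, small-positive in-edge from inside the lobe), and one needs exactly the path-simplicity/revisiting argument you spell out to discard that second in-edge; your segment-wise treatment also handles the small positive auxiliary edges of the negative-literal traversals explicitly. Two minor points to tidy up: your claim that $j_1=j_2$ ``since $x_i$ occurs in $C_h$ at most once'' uses an unstated (though standard) assumption that a clause does not contain a variable together with its negation -- and is not actually needed, since a mixed positive/negative connector pair still costs $m+0\geq0$; and your Case~A enumeration should also list the single chain-edge $v_h\to u_{h+1}$ as an atomic segment rather than deferring it to Case~B.
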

\begin{proof}
Apparently, in $G\setminus e(v_{m},\,u_{1})$, every edge with negative
cost is exactly with cost $-2m$. Let $(y_{i}^{j},\,z_{i}^{j})$ be
such an edge with cost $-2m$. From the structure of $G$, there exists
exactly one edge entering $y_{i}^{j}$, and exactly one leaving $z_{i}^{j}$,
each of which is with exactly cost $m$. So for every path $P(u,\,v)$,
$u,\,v\in U$, if it contains edge $(y_{i}^{j},\,z_{i}^{j})$, then
it must also go through both the edge entering $y_{i}^{j}$ and the
edge leaving $z_{i}^{j}$. That is, the three edges must all present
or all absent in $P(u,\,v)$, and contribute a total cost 0. Therefore,
$c(P(u,\,v))\geq0$ must hold.\qed
\end{proof}
Now the proof of Lemma \ref{lem:reduction} is as below:
\begin{proof}
Suppose that there exists an assignment $\tau:\,x\rightarrow\{true,\,false\}$
satisfying all the $m$ clauses. Since $c(v_{m},\,u_{1})=-1$, we
need only to show there exists a $u_{1}v_{m}$-path with cost smaller
than 1 by construction such one path. For a satisfied clause $C_{k}$,
there must exist a literal, say $w_{k}$ with $\tau(w_{k})=true$.
If $w_{k}$ is the $j$th occurrence of $x_{i}$, then set the corresponding
subpath as $P_{k}=u_{k}-y_{i}^{j}-z_{i}^{j}-v_{k}$. We need only
to show $P=\{P_{k}\vert k=1,\dots,\,m\}\cup\{(v_{h},\,u_{h+1})\vert h=1,\dots,\,m-1\}$
exactly composes a path from $u_{1}$ to $v_{m}$ with cost smaller
than 1. For the first, $P$ is a path. Because $\tau$ is an assignment,
$\tau(x_{i})=true$ and $\tau(\overline{x}_{i})=true$ can not both
hold, and hence $P$ contains no length-2 cycle. For the cost, according
to the construction, if $\tau(w_{k})=\tau(x_{i})=true$ then the subpath
$P_{k}$ is with cost exactly equal to 0; otherwise, i.e. $\tau(w_{k})=\tau(\overline{x}_{i})=true$,
the subpath $P_{k}$ is with cost exactly equal to $\frac{1}{m+1}$.
Meanwhile, $c(e(v_{h},\,u_{h+1}))=0$ for each $h$. Therefore, the
total cost $c(P)\leq\frac{m}{m+1}<1$, where the maximum is attained
when all clauses are all satisfied by negative of the variables.

Conversely, assume that there exists a negative-cost cycle $O$, which
contains NO negative cost length-2 cycle. According to Proposition
\ref{prop:nonnegative},  $e(v_{m},u_{1})$ must appears on $O$,
so that $c(O)<0$ can hold. Let $P=O\setminus e(v_{m},u_{1})$ and
$\tau$ be a true assignment according to $P$: if $P$ goes through
literal $\overline{x}_{i}$, set $\tau(x_{i})=false$ and set $\tau(x_{i})=true$
otherwise. It remains to show such the assignment according to $P$
satisfies all the clauses. To do this, we shall firstly show $P$
will go through all the vertices of $U$ in the order $u_{1}\prec v_{1}\prec\dots\prec u_{i}\prec v_{i}\prec\dots\prec u_{m}\prec v_{m}$;
and secondly show that $P(u_{h},\,v_{h})$ has to go through exactly
a subpath corresponding to a literal, say $w$, for which if $\tau(w)=true$,
then $C_{h}$ is satisfied. Then from the fact that $P$ contains
no negative cost length-2 cycle, $\tau$ is a feasible assignment
satisfying all the clauses.\textbf{}

For the first, according to Proposition \ref{prop:neighbour}, if
$P(u_{i},\,v_{j})\cap U=\{u_{i},\,v_{j}\}$, then $j=i$ (i.e. $v_{j}=v_{i}$)
must hold. Since otherwise, according to Proposition \ref{prop:neighbour}
$c(P(u_{i},\,v_{j}))\geq m$ must hold; while according to Proposition
\ref{prop:nonnegative}, the other parts of $P$ is with $c(P(u_{1},\,u_{i}))\geq0$
and $c(P(v_{j},\,v_{m}))\geq0$. That is, $c(P)\geq m$. On the other
hand, since $c(e(v_{m},\,u_{1}))=-1$ and $c(O)<0$, we have $c(P)<1$,
a contradiction. Furthermore, since there exists only one edge leaving
$v_{i}$, i.e. $(v_{i},\,u_{i+1})$, $P$ must go through every edge
$e(v_{i},\,u_{i+1})$ incrementally on $i$, i.e. in the order of
$u_{1}\prec v_{1}\prec\dots\prec u_{i}\prec v_{i}\prec\dots\prec u_{m}\prec v_{m}$.
For the second, according to the structure of $G$ and $c(P)<1$,
$c(P(u_{h},\,v_{h}))\leq\frac{1}{m+1}$ must hold. Then $c(P(u_{h},\,v_{h}))$
has to go through exactly a subpath corresponding to a literal.\qed
\end{proof}
Note that, a simple undirected graph does not allow length-2 cycles.
Anyhow, by replacing length-2 cycles with length-3 cycles in the above
proof, i.e. replacing every edge $(y_{j},\,z_{j})$ with two edges
$(y_{j},\,w_{j})$ and $(w_{j},\,z_{j})$ of the same cost, and ignoring
the direction of the edges, we immediately have the correctness of
Corollary \ref{cor:simplek4}.
\begin{cor}
\label{cor:simplek4}For any fixed integer $k\geq4$, $k$LNCC is
${\cal NP}$-complete in a simple undirected graph.
\end{cor}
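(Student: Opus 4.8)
The plan is to deduce Corollary~\ref{cor:simplek4} from the machinery already built for Lemma~\ref{lem:reduction}, rather than redoing the reduction from scratch. The starting observation is that the construction of $G$ used for Theorem~\ref{thm:1} relies on length-$2$ cycles in two essential ways: the lobe-edges $(y_i^j,z_i^j)$ together with their ``reverse'' counterparts (the cost-$\frac{1}{m+1}$ or cost-$\frac{1}{2m+2}$ edges) form length-$2$ cycles whose absence encodes the consistency of the truth assignment, and the argument in the proof of Lemma~\ref{lem:reduction} repeatedly invokes ``$O$ contains no negative cost length-$2$ cycle''. In an undirected simple graph neither edge-multiplicities nor length-$2$ cycles are available, so the first step is to subdivide: replace each edge $(y_j,z_j)$ (using the notation of the remark preceding the corollary, i.e.\ every cost-$(-2m)$ lobe-edge) by a path $y_j - w_j - z_j$ through a fresh vertex $w_j$, splitting the cost arbitrarily between the two new edges, and then forget all edge orientations. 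This turns every length-$2$ cycle of the directed construction into a length-$3$ cycle of the new undirected graph $G'$, while leaving all costs unchanged; in particular $c(O)<0$ is preserved and ``$l(O)\ge 2$'' becomes ``$l(O)\ge 3$''.

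The second step is to re-run the equivalence proof of Lemma~\ref{lem:reduction} on $G'$. For the forward direction (satisfiable $3O3SAT$ instance $\Rightarrow$ negative $k$-cycle), the witness path $P=\{P_k\}\cup\{(v_h,u_{h+1})\}$ plus the edge $(v_m,u_1)$ is reproduced verbatim, except each hop $y_i^j\to z_i^j$ inside $P_k$ becomes the two-edge detour $y_i^j - w_i^j - z_i^j$; the resulting closed walk is still a cycle in the simple undirected sense (no vertex repeated), it still has the same cost $\le \frac{m}{m+1}-1<0$, and its length is now comfortably $\ge 3$. For the converse, one takes a negative undirected cycle $O$ with $l(O)\ge 3$; Propositions~\ref{prop:neighbour} and~\ref{prop:nonnegative} were stated in terms of the cost structure around the cost-$(-2m)$ edges and the unique in/out edges at the lobe vertices, and that structure is untouched by subdivision, so the ``must use $(v_m,u_1)$'', ``must traverse $U$ in order'', and ``$c(P(u_h,v_h))\le\frac{1}{m+1}$'' arguments go through almost unchanged, yielding a satisfying assignment $\tau$ exactly as before. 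Finally, the degenerate bad configurations that the original proof ruled out via ``no length-$2$ cycle'' are in $G'$ ruled out via ``no length-$3$ cycle through $w_j$ and back'', which is automatic once $l(O)\ge 4$ is assumed — and this is precisely why the corollary is stated for $k\ge 4$ rather than $k\ge 3$.

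The third step is the quantification over general $k\ge 4$: so far the construction gives hardness for, in effect, $k=4$, and we need it for every fixed $k\ge 4$. The standard padding trick suffices — subdivide each cost-$(-2m)$ lobe-edge not once but $k-2$ times (again splitting its cost among the new edges, e.g.\ putting all of it on one of them and $0$ on the rest), so that every would-be short cycle in $G'$ now has length at least $k$. One then checks, just as above, that the forward witness cycle has length $\ge k$ and the converse extraction of $\tau$ is unaffected, because lengthening a subdivided edge changes neither its total cost nor the uniqueness of the edges incident to the lobe vertices. Membership in ${\cal NP}$ is immediate (guess the cycle, verify $l(O)\ge k$ and $c(O)<0$ in polynomial time), and polynomial-time constructibility of $G'$ is obvious.

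The main obstacle I anticipate is purely bookkeeping rather than conceptual: one must make sure that after subdivision and orientation-removal no \emph{new} negative short cycle is created that does not correspond to an honest ``flip'' of the original directed length-$2$ cycles — in particular that ignoring edge directions does not let a cycle traverse a cost-$(-2m)$ edge ``backwards'' to produce a spurious negative cycle avoiding the cost-$m$ penalty edges. Verifying this amounts to re-examining the degree and cost structure at each $y_i^j$, $z_i^j$, and $w_i^j$ in the undirected graph and confirming that the only closed walks of cost $<0$ and length $\ge k$ are exactly the ones the argument of Lemma~\ref{lem:reduction} already classifies; since each such vertex still has a very rigid local neighbourhood (a unique cost-$m$ edge on the ``outer'' side and the subdivided path on the ``inner'' side), this should cause no real trouble, but it is the step where the proof could silently go wrong if the cost-splitting in the subdivision is chosen carelessly.
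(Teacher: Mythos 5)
Your overall route is the same as the paper's: subdivide the cost-$(-2m)$ lobe-edges through fresh vertices and then forget the orientations, so that the troublesome length-$2$ cycles become length-$3$ cycles which are excluded once $k\geq4$; the paper disposes of the corollary in exactly one sentence this way, and your forward direction and padding step are fine. But the issue you flag in your final paragraph and then set aside as ``bookkeeping'' is precisely where the argument breaks, and neither your sketch nor the paper's resolves it. The failure mode is not quite the one you name (traversing a $-2m$ edge backwards); it is that, once directions are dropped, a cycle can \emph{enter} a lobe through a connection edge that in the directed graph only \emph{leaves} it, and thereby reach the tail of a $-2m$ edge without paying the cost-$m$ edge from $u_k$ on which Proposition~\ref{prop:nonnegative} relies. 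Concretely, in the paper's running example take the Case-3 variable $x_1$: its negated occurrence in $C_2$ forces the identification $z_1^{2}=y_1^{1}$ and hence places a cost-$0$ edge between $y_1^{1}$ and $v_2$. In the undirected graph the closed walk $v_1 - u_2 - y_4^{1} - w - z_4^{1} - v_2 - y_1^{1} - w' - z_1^{1} - v_1$ is a simple cycle of length $9\geq4$ with cost $0+m-2m+m+0-2m+m=-m<0$: the step from $v_2$ to $y_1^{1}$ uses the cost-$0$ edge $(z_1^{2},v_2)$ in reverse, so the $-2m$ edge $(y_1^{1},z_1^{1})$ is bracketed by a cost-$0$ entry and a single cost-$m$ exit instead of two cost-$m$ edges.

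This cycle exists whether or not the formula is satisfiable (the same local pattern occurs in any instance containing a variable of mixed polarity), so the converse direction of Lemma~\ref{lem:reduction} does not survive the passage to the undirected graph: Proposition~\ref{prop:nonnegative} becomes false there, and your subdivision-based padding only lengthens the spurious cycle without destroying it. Your step two --- the claim that Propositions~\ref{prop:neighbour} and~\ref{prop:nonnegative} ``go through almost unchanged'' because the local structure around the $-2m$ edges is ``untouched by subdivision'' --- is therefore the gap: subdivision is harmless, but orientation-removal changes the degree structure at the identified vertices $z_i^{j_3}=y_i^{j_1}$ and $y_i^{j_3}=z_i^{j_2}$, which now admit cheap undirected entries. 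Repairing the proof requires an additional gadget (for example, un-identifying these vertices and joining them by paths priced so that every undirected traversal of a $-2m$ edge still forces $2m$ of penalty), together with a re-proof of both propositions for the undirected graph; neither your proposal nor the paper's one-line justification supplies this.
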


\section{Conclusion}

In this paper, we have shown the ${\cal NP}$-completeness for both
the $k$-length negative cost\textbf{ }cycle ($k$\emph{LNCC}) problem
in a simple directed graph and the fixed-point \textbf{\emph{$k$-}}length
negative cost cycle \emph{(FP$k$LNCCT)} problem in a directed multigraph,
which have wide applications in parallel computing, particularly in
deadlock avoidance for streaming computing systems. Consequently,
it can be concluded that $k$LNCC is ${\cal NP}$-complete in a simple
undirected graph and that \emph{FP$k$LNCCT} is also ${\cal NP}$-complete
in a simple directed graph. In future, we will investigate approximation
algorithms for these two problems.

\bibliographystyle{plain}
\bibliography{kcycle}

\end{document}